\documentclass[english,american,aps,prl,reprint]{revtex4-1}
\usepackage{lmodern}

\usepackage[T1]{fontenc}
\usepackage[latin9]{inputenc}
\usepackage{babel}
\usepackage{amsthm}
\usepackage{amsmath}
\usepackage{amssymb}
\usepackage[unicode=true,pdfusetitle,
 bookmarks=true,bookmarksnumbered=false,bookmarksopen=false,
 breaklinks=false,pdfborder={0 0 0},backref=false,colorlinks=false]
 {hyperref}
\usepackage{breakurl}

\makeatletter
\@ifundefined{textcolor}{}
{%
 \definecolor{BLACK}{gray}{0}
 \definecolor{WHITE}{gray}{1}
 \definecolor{RED}{rgb}{1,0,0}
 \definecolor{GREEN}{rgb}{0,1,0}
 \definecolor{BLUE}{rgb}{0,0,1}
 \definecolor{CYAN}{cmyk}{1,0,0,0}
 \definecolor{MAGENTA}{cmyk}{0,1,0,0}
 \definecolor{YELLOW}{cmyk}{0,0,1,0}
 }
\theoremstyle{plain}
\newtheorem{thm}{\protect\theoremname}

\usepackage{lmodern}
\usepackage{braket}
\usepackage[usenames,dvipsnames]
{pstricks}\usepackage{epsfig}

\makeatother

  \addto\captionsamerican{\renewcommand{\theoremname}{Theorem}}
  \addto\captionsenglish{\renewcommand{\theoremname}{Theorem}}
\providecommand{\theoremname}{Theorem}

\begin{document}

\title{Behavior of Quantum Correlations under Local Noise}

\author{Alexander Streltsov}

\email{streltsov@thphy.uni-duesseldorf.de}

\author{Hermann Kampermann}

\author{Dagmar Bruß}

\affiliation{Heinrich-Heine-Universität Düsseldorf, Institut für Theoretische
Physik III, D-40225 Düsseldorf, Germany}
\begin{abstract}
We characterize the behavior of quantum correlations under the influence
of local noisy channels. Intuition suggests that such noise should
be detrimental for quantumness. When considering qubit systems, we
show for which channel this is indeed the case: the amount of quantum
correlations can only decrease under the action of unital channels.
However, non-unital channels (e.g. such as dissipation) can create
quantum correlations for some initially classical state. Furthermore,
for higher-dimensional systems even unital channels may increase the
amount of quantum correlations. Thus, counterintuitively, local decoherence
can generate quantum correlations.
\end{abstract}
\maketitle
Composite quantum states often reveal puzzling features of nature.
Recently, much interest \cite{Merali2011} has been devoted to the
study of quantum correlations that may arise without entanglement:
here, the quantumness of a composite system manifests itself even
in a separable state. The fact that such quantum correlations are
present \cite{Datta2008} in an algorithm for mixed state quantum
computing \cite{Knill1998} has stimulated intensive investigations
into measures for quantum correlations \cite{Ollivier2001,Henderson2001,Oppenheim2002,Horodecki2005a,Modi2010,Daki'c2010,Adesso2010,Giorda2010,Streltsov2011,Piani2011}
and their properties and interpretations \cite{Zurek2003a,Vedral2003,Koashi2004,Lanyon2008,Werlang2010,Ferraro2010,Lang2010,Fanchini,Cavalcanti2011,Madhok2011,Rodr'iguez-Rosario2011,Luo2011,Gharibian}.
Some studies of the dynamics of quantum correlations have been presented
in \cite{Shabani2009,Mazzola2010}.

An appeal of mixed state quantum computation lies in the possibility
to be run in a noisy environment: pure entangled states are typically
fragile, and the resource of entanglement is easily destroyed by noise.
For an open system the transition from entangled to separable states
is only a matter of time - as the volume of the set of separable states
is non-zero \cite{.Zyczkowski1998}, typically it takes a finite time
for entanglement to disappear under noise such as dissipation or decoherence
\cite{Zurek2003}.

Mixed state quantum computation as suggested in \cite{Knill1998}
already uses separable states, so it is natural to assume that it
can be run also in a noisy environment. However, in order to verify
or falsify this conjecture, one has to study the behavior of quantum
correlations under noisy channels (described by trace-preserving completely
positive maps). Here we only consider {\em local} noisy channels
- as correlated channels may also preserve entanglement (with or even
without some degradation, depending on the amount of correlation),
see e.g. \cite{Macchiavello2002}. The goal of this Letter is to answer
questions such as: Which types of noisy channels decrease the amount
of quantum correlations? Are there any noisy channels that might even
{\em increase} the amount of quantum correlations? How does dissipation
influence quantum correlations, and how are they affected by decoherence?
- We point out that our answers to these questions also apply to the
situation where one actively performs local operations on a composite
quantum system, e.g. with the aim of creating or preserving quantum
correlations.

In general, a bipartite quantum state is called fully \emph{classically
correlated}, if it can be written in the form \cite{Oppenheim2002,Horodecki2005a}
\begin{equation}
\rho_{cc}=\sum_{i,j}p_{ij}\ket{i^{A}}\bra{i^{A}}\otimes\ket{j^{B}}\bra{j^{B}},\label{eq:cc}
\end{equation}
where $\{\ket{i^{A}}\}$ and $\{\ket{j^{B}}\}$ are sets of orthogonal
states of party A and B, respectively, with nonnegative probabilities
$p_{ij}$ that add up to one. If a state cannot be written as in Eq.
(\ref{eq:cc}), it is called \emph{quantum correlated. }These definitions
can be extended to any number of parties \cite{Piani2011}. As a simple
example consider the classically correlated state of two qubits 
\begin{equation}
\rho_{cc}=\frac{1}{2}\ket{0^{A}}\bra{0^{A}}\otimes\ket{0^{B}}\bra{0^{B}}+\frac{1}{2}\ket{1^{A}}\bra{1^{A}}\otimes\ket{1^{B}}\bra{1^{B}}.\label{example}
\end{equation}
Using a local channel on qubit $A$ only (namely a local measurement
and subsequent replacement) it is possible to create from the classically
correlated state (\ref{example}) the quantum correlated state 
\begin{equation}
\rho=\frac{1}{2}\ket{0^{A}}\bra{0^{A}}\otimes\ket{0^{B}}\bra{0^{B}}+\frac{1}{2}\ket{+^{A}}\bra{+^{A}}\otimes\ket{1^{B}}\bra{1^{B}}\label{outexample}
\end{equation}
 with $\ket{+^{A}}=\frac{1}{\sqrt{2}}\left(\ket{0}+\ket{1}\right)$.
The quantum channel that achieves this transformation can be formally
written as the completely positive trace-preserving map 
\begin{equation}
\rho=\Lambda_{A}\left(\rho_{cc}\right)=E_{1}\rho E_{1}^{\dagger}+E_{2}\rho E_{2}^{\dagger}\label{cpmap}
\end{equation}
 with local Kraus operators $E_{1}=\ket{0^{A}}\bra{0^{A}}$ and $E_{2}=\ket{+^{A}}\bra{1^{A}}$
acting only on qubit $A$. The state in Eq. (\ref{outexample}) is
not of the form (\ref{eq:cc}), i.e. it is quantum correlated.

As will become clear below in this Letter, one reason why the local
quantum channel in Eq. (\ref{cpmap}) is able to create quantum correlations
lies in its action on the maximally mixed state $\frac{1}{2}\openone_{A}$.
Observe that $\Lambda_{A}\left(\frac{1}{2}\openone_{A}\right)=\frac{1}{2}\ket{0^{A}}\bra{0^{A}}+\frac{1}{2}\ket{+^{A}}\bra{+^{A}}\neq\frac{1}{2}\openone_{A}$.
This property is also known as \emph{non-unitality}. A single-qubit
quantum channel $\Lambda$ is called unital if and only if it maps
the maximally mixed state onto itself: $\Lambda\left(\frac{1}{2}\openone\right)=\frac{1}{2}\openone$,
see also Fig. \ref{fig:unital}. We will turn this observation into
Theorem \ref{thm:1} by showing that non-unitality is one property
which enables a local channel to create quantum correlations in a
multi-qubit system. In Theorem \ref{thm:2} we will show that on the
other hand local unital quantum channels cannot increase quantum correlations
in a multi-qubit system. However, this statement does not hold for
higher dimensions. 
\begin{figure}
\scalebox{1} 
{
\definecolor{color}{rgb}{0.35294117647058826,0.592156862745098,1.0}
\definecolor{color2}{rgb}{0.058823529411764705,0.9333333333333333,0.42745098039215684}
\definecolor{color3}{rgb}{0.9568627450980393,0.8431372549019608,0.054901960784313725}

\begin{pspicture}(-1.5,-2)(6.5,5.5)
\psellipse[linewidth=0.04,dimen=outer,fillstyle=solid,fillcolor=color](0,0)(1.5,1.5)
\psellipse[linewidth=0.04,dimen=outer,linestyle=dashed](0,0)(1.5,0.6)
\psellipse[linewidth=0.04,dimen=outer,fillstyle=solid,fillcolor=color](5,0)(1.5,1.5)
\psellipse[linewidth=0.04,dimen=outer,linestyle=dashed](5,0)(1.5,0.6)
\psdot(0,0)
\psdot(5,0)

\psellipse[linewidth=0.04,dimen=outer,fillstyle=solid,fillcolor=color](0,4)(1.5,1.5)
\psellipse[linewidth=0.04,dimen=outer,linestyle=dashed](0,4)(1.5,0.6)
\psellipse[linewidth=0.04,dimen=outer,fillstyle=solid,fillcolor=color](5,4)(1.5,1.5)
\psellipse[linewidth=0.04,dimen=outer,linestyle=dashed](5,4)(1.5,0.6)
\psdot(0,4)
\psdot(5,4)

\rput{-8}(0,0){
\psline[linewidth=0.04cm](0,0)(3,0.27)
\psline[linewidth=0.04cm](0,0)(3,-0.27)
\psellipse[linewidth=0.04,dimen=outer,fillstyle=solid,fillcolor=color3](3,0)(0.2,0.29)
\psline[linewidth=0.5cm,linecolor=color3,arrowsize=0.1cm 2.0,arrowlength=1.5,arrowinset=0.4]{->}(3,0)(5,0)
}

\rput{0}(0,4){
\psline[linewidth=0.04cm](0,0)(3,0.27)
\psline[linewidth=0.04cm](0,0)(3,-0.27)
\psellipse[linewidth=0.04,dimen=outer,fillstyle=solid,fillcolor=color2](3,0)(0.2,0.29)
\psline[linewidth=0.5cm,linecolor=color2,arrowsize=0.1cm 2.0,arrowlength=1.5,arrowinset=0.4]{->}(3,0)(5,0)
\psdot(5,0)}

\rput(-0.3,0){$\frac{1}{2}\openone$}
\rput(5.3,0){$\frac{1}{2}\openone$}
\rput(-0.3,4){$\frac{1}{2}\openone$}
\rput(5.3,4){$\frac{1}{2}\openone$}
\rput(4,4){$\Lambda_u$}

\rput{70}(0,0){
\psline[linewidth=0.04cm,arrowsize=0.1cm 2.0,arrowlength=1.5,arrowinset=0.4]{<->}(-1.5,0)(1.5,0)
\psdot(-1.5,0)
\psdot(1.5,0)}
\rput(-0.5,-1.65){$\ket{\psi_1}$}
\rput(0.6,1.65){$\ket{\psi_2}$}

\rput(4.95,-0.9){$\sigma$}
\rput(4.4,-0.28){$\sigma_1$}
\rput(5.55,-1.1){$\sigma_2$}
\rput(4.0,-0.55){$\Lambda_{nu}$}

\rput{70}(0,4){
\psline[linewidth=0.04cm,arrowsize=0.1cm 2.0,arrowlength=1.5,arrowinset=0.4]{<->}(-1.5,0)(1.5,0)
\psdot(-1.5,0)
\psdot(1.5,0)}
\rput(-0.5,2.35){$\ket{\psi_1}$}
\rput(0.6,5.65){$\ket{\psi_2}$}

\rput{110}(5,4){
\psline[linewidth=0.04cm,arrowsize=0.1cm 2.0,arrowlength=1.5,arrowinset=0.4]{<->}(-1.0,0)(1.0,0)
\psdot(-1.0,0)
\psdot(1.0,0)}
\rput(4.6,5.1){$\rho_1$}
\rput(5.4,2.85){$\rho_2$}

\rput{-139}(5,0){
\psline[linewidth=0.04cm,arrowsize=0.1cm 2.0,arrowlength=1.5,arrowinset=0.4]{->}(0,0)(0.5,1)
\psdot(0.5,1)
\psline[linewidth=0.04cm,arrowsize=0.1cm 2.0,arrowlength=1.5,arrowinset=0.4]{->}(0,0)(0.5,0)
\psdot(0.5,0)
\psdot(0.5,0.5)
}

\end{pspicture} 
}

\caption{\selectlanguage{english}%
\label{fig:unital}\foreignlanguage{american}{Quantum channels on
a single qubit: The upper figure shows a unital quantum channel $\Lambda_{u}$
(green arrow) which maps the maximally mixed state $\frac{1}{2}\openone$
onto itself: $\Lambda_{u}\left(\frac{1}{2}\openone\right)=\frac{1}{2}\openone$.
Two orthogonal states $\ket{\psi_{1}}$ and $\ket{\psi_{2}}$ with
collinear Bloch vectors are mapped onto the states $\rho_{1}=\Lambda_{u}\left(\ket{\psi_{1}}\bra{\psi_{1}}\right)$
and $\rho_{2}=\Lambda_{u}\left(\ket{\psi_{2}}\bra{\psi_{2}}\right)$
with collinear Bloch vectors. The lower figure shows a non-unital
quantum channel $\Lambda_{nu}$ (yellow arrow) which maps the maximally
mixed state onto the state $\sigma=\Lambda_{nu}\left(\frac{1}{2}\openone\right)\neq\frac{1}{2}\openone$.
The Bloch vectors of $\sigma_{1}=\Lambda_{nu}\left(\ket{\psi_{1}}\bra{\psi_{1}}\right)$
and $\sigma_{2}=\Lambda_{nu}\left(\ket{\psi_{2}}\bra{\psi_{2}}\right)$
add up to twice the non-zero Bloch vector of $\sigma$, see main text.}\selectlanguage{american}
}
\end{figure}

Before presenting the main result of this Letter, we introduce the
\emph{semi-classical} channel $\Lambda_{sc}$. It maps all input states
$\rho$ onto states $\Lambda_{sc}\left(\rho\right)$ which are diagonal
in the same basis: 
\begin{equation}
\Lambda_{sc}\left(\rho\right)=\sum_{k}p_{k}\left(\rho\right)\ket{k}\bra{k}.\label{eq:sc}
\end{equation}
 The nonnegative probabilities $p_{k}\left(\rho\right)$ can in general
depend on the input state $\rho$, while the orthogonal states $\ket{k}$
are independent of $\rho$. Such a channel is e.g. realized by complete
decoherence, after which only the diagonal elements of a density matrix
may be non-zero. We are now in the position to prove the following
theorem. 
\begin{thm}
\label{thm:1}A local quantum channel acting on a single qubit can
create quantum correlations in a multi-qubit system if and only if
it is neither semi-classical nor unital.
\end{thm}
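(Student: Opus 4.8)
The plan is to pass to the Bloch representation of the qubit acted on by $\Lambda$: every single-qubit channel has the affine form $\rho=\frac{1}{2}(\openone+\vec r\cdot\vec\sigma)\mapsto\Lambda(\rho)=\frac{1}{2}(\openone+(M\vec r+\vec c)\cdot\vec\sigma)$ for a real $3\times3$ matrix $M$ and a vector $\vec c$, so that $\Lambda$ is unital exactly when $\vec c=0$. I will use two elementary facts: (i) two single-qubit states commute iff their Bloch vectors are collinear, since the commutator is proportional to $(\vec a\times\vec b)\cdot\vec\sigma$, and a family of mutually commuting qubit states is diagonal in one common basis; (ii) orthogonal pure qubit states have opposite Bloch vectors. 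Fact (i) also yields a geometric description of semi-classical channels: $\Lambda_{sc}$ sends everything to states diagonal in a fixed basis iff the image of the Bloch ball under $\vec r\mapsto M\vec r+\vec c$ lies on a single line through the origin, i.e. iff $M=0$ or $\mathrm{rank}\,M=1$ with $\mathrm{im}\,M\parallel\vec c$. The heart of the proof is the equivalence, which I would isolate as a lemma: $\Lambda$ is neither semi-classical nor unital if and only if there exists a unit vector $\vec n$ with $M\vec n\times\vec c\neq0$. The backward direction is immediate (such $\vec n$ forces $\vec c\neq0$ and a non-collinear image); the forward direction is a short case check separating $\mathrm{rank}\,M\ge2$ from $\mathrm{rank}\,M\le1$.

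For the ``if'' direction, assume $\Lambda$ is neither semi-classical nor unital and pick $\vec n$ with $M\vec n\times\vec c\neq0$. Let $\ket{\psi_1},\ket{\psi_2}$ be the orthogonal pure states of qubit $A$ with Bloch vectors $\pm\vec n$ and form the classically correlated two-qubit state $\rho_{cc}=\frac{1}{2}\ket{\psi_1}\bra{\psi_1}\otimes\ket{0^B}\bra{0^B}+\frac{1}{2}\ket{\psi_2}\bra{\psi_2}\otimes\ket{1^B}\bra{1^B}$. Then $\Lambda_A(\rho_{cc})$ is block-diagonal in the $B$ basis $\{\ket{0^B},\ket{1^B}\}$ with blocks whose Bloch vectors are $M\vec n+\vec c$ and $-M\vec n+\vec c$; such a block-diagonal state is classically correlated iff the two blocks commute, i.e. iff $(M\vec n+\vec c)\times(-M\vec n+\vec c)=2\,M\vec n\times\vec c$ vanishes. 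By the choice of $\vec n$ it does not, so $\Lambda_A(\rho_{cc})$ is quantum correlated. (For the channel of Eq.~(\ref{cpmap}) this construction returns precisely the states of Eqs.~(\ref{example}) and (\ref{outexample}).)

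For the ``only if'' direction I prove the contrapositive in two cases. If $\Lambda$ is semi-classical with fixed output basis $\{\ket k\}$, then acting on an arbitrary fully classically correlated state $\sum_{i,j}p_{ij}\ket{i^A}\bra{i^A}\otimes\ket j\bra j$ it replaces each $\ket{i^A}\bra{i^A}$ by $\sum_k p_k(\ket{i^A}\bra{i^A})\ket k\bra k$, which merely redistributes probabilities among product eigenstates $\ket k\otimes\ket j$ and leaves the state classically correlated. If $\Lambda$ is unital, write such a state as $\sum_j(p_{0j}\ket{0^A}\bra{0^A}+p_{1j}\ket{1^A}\bra{1^A})\otimes\ket j\bra j$; since $\ket{0^A}\bra{0^A}$ and $\ket{1^A}\bra{1^A}$ have opposite Bloch vectors and $\vec c=0$, their images under $\Lambda$ have opposite Bloch vectors, hence commute and are simultaneously diagonal in one basis $\{\ket{e_0},\ket{e_1}\}$ that does not depend on $j$; substituting turns the output into $\sum_{k,j}q_{kj}\ket{e_k}\bra{e_k}\otimes\ket j\bra j$ with $q_{kj}\ge0$, again classically correlated. (This also recovers the special case of Theorem~\ref{thm:2} in which the input already has no quantum correlations.) I expect the only genuine subtlety to be the lemma above: correctly establishing that ``neither semi-classical nor unital'' is exactly the condition for the existence of a Bloch direction with $M\vec n\times\vec c\neq0$, which hinges on the collinear-image characterization of semi-classical qubit channels and on handling the degenerate possibilities $M=0$ and $\mathrm{rank}\,M=1$ with $\mathrm{im}\,M\parallel\vec c$; once that is settled, the rest is the routine bookkeeping sketched above.
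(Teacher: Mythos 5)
Your proposal is correct and follows essentially the same route as the paper: the same three-part argument (semi-classical channels trivially preserve the classically correlated form, unital channels preserve it because orthogonal pure inputs with opposite Bloch vectors are mapped to commuting outputs, and a two-term classically correlated state built from a suitably chosen orthogonal pair is mapped to conditional states with non-collinear Bloch vectors $\vec{c}\pm M\vec{n}$). Your lemma characterizing ``neither semi-classical nor unital'' by the existence of $\vec{n}$ with $M\vec{n}\times\vec{c}\neq 0$ is just the affine-representation repackaging of the paper's observation that non-semi-classicality supplies a pure state whose image is not diagonal in the eigenbasis of $\Lambda\left(\tfrac{1}{2}\openone\right)$, so the two arguments coincide in substance.
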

\begin{proof} For simplicity we restrict ourselves to two qubits
only. A generalization to an arbitrary number of qubits is straightforward.
The action of a local semi-classical channel $\Lambda_{sc}^{A}$ on
the classically correlated state (\ref{eq:cc}) is, due to linearity,
\begin{equation}
\Lambda_{sc}^{A}\left(\rho_{cc}\right)=\sum_{i,j}p_{ij}\Lambda_{sc}^{A}\left(\ket{i^{A}}\bra{i^{A}}\right)\otimes\ket{j^{B}}\bra{j^{B}}.
\end{equation}
 The definition of a semi-classical channel in Eq. (\ref{eq:sc})
directly implies that $\Lambda_{sc}^{A}\left(\rho_{cc}\right)$ is
classically correlated.

Now we will show that a local unital channel never creates quantum
correlations in a multi-qubit system. A local unital channel $\Lambda_{u}^{A}$
on the qubit $A$ takes a classically correlated state to the state
\begin{equation}
\Lambda_{u}^{A}\left(\rho_{cc}\right)=\sum_{i,j}p_{ij}\Lambda_{u}^{A}\left(\ket{i^{A}}\bra{i^{A}}\right)\otimes\ket{j^{B}}\bra{j^{B}}.
\end{equation}
 The action of the unital channel on the pure state $\ket{i^{A}}\bra{i^{A}}$
can be studied using the Bloch representation: $\ket{0^{A}}\bra{0^{A}}=\frac{1}{2}\left(\openone_{A}+\sum_{i}r_{i}\sigma_{i}^{A}\right)$,
where $\sigma_{i}^{A}$ are the Pauli operators with $i\in\{x,y,z\}$,
and $\ket{1^{A}}\bra{1^{A}}=\frac{1}{2}\left(\openone_{A}-\sum_{i}r_{i}\sigma_{i}^{A}\right)$.
Using linearity and unitality of $\Lambda_{u}^{A}$ we see that the
state $\ket{0^{A}}\bra{0^{A}}$ is mapped onto the state $\rho_{0}^{A}=\Lambda_{u}^{A}\left(\ket{0^{A}}\bra{0^{A}}\right)=\frac{1}{2}\left(\openone_{A}+\sum_{i}r_{i}\Lambda_{u}^{A}\left(\sigma_{i}^{A}\right)\right)$.
The same procedure for $\ket{1^{A}}\bra{1^{A}}$ results in $\rho_{1}^{A}=\Lambda_{u}^{A}\left(\ket{1^{A}}\bra{1^{A}}\right)=\frac{1}{2}\left(\openone_{A}-\sum_{i}r_{i}\Lambda_{u}^{A}\left(\sigma_{i}^{A}\right)\right)$.
Note that the Bloch vectors of the states $\rho_{0}^{A}$ and $\rho_{1}^{A}$
point into opposite directions, see Fig. \ref{fig:unital} for illustration.
States with this property can be diagonalized in the same basis. This
implies that $ $it is possible to write the state $\Lambda_{u}^{A}\left(\rho_{cc}\right)$
in the form (\ref{eq:cc}). Thus we proved that local unital quantum
channels cannot create quantum correlations in a classically correlated
multi-qubit state.

In the following we will complete the proof of Theorem \ref{thm:1}
by showing that any local quantum channel $\Lambda_{nu}^{A}$ that
is neither unital nor semi-classical can create quantum correlations.
By definition $\Lambda_{nu}^{A}$ maps the maximally mixed state $\frac{1}{2}\openone_{A}$
onto some state that is not maximally mixed: 
\begin{equation}
\Lambda_{nu}^{A}\left(\frac{1}{2}\openone_{A}\right)=\frac{1}{2}\left(\openone_{A}+\sum_{i}s_{i}\sigma_{i}^{A}\right),
\end{equation}
 with $\sum_{i}s_{i}^{2}\neq0$. Since we demand that the quantum
channel is not semi-classical, there exists a state $\ket{\psi^{A}}$
such that $\Lambda_{nu}^{A}\left(\ket{\psi^{A}}\bra{\psi^{A}}\right)$
is not diagonal in the eigenbasis of $\Lambda_{nu}^{A}\left(\frac{1}{2}\openone_{A}\right)$.
Again we consider the Bloch representation 
\begin{equation}
\Lambda_{nu}^{A}\left(\ket{\psi^{A}}\bra{\psi^{A}}\right)=\frac{1}{2}\left(\openone_{A}+\sum_{j}r_{j}\sigma_{j}^{A}\right)
\end{equation}
and note that the two Bloch vectors $\boldsymbol{r}$ and $\boldsymbol{s}$
are linearly independent. Otherwise the states $\Lambda_{nu}^{A}\left(\ket{\psi^{A}}\bra{\psi^{A}}\right)$
and $\Lambda_{nu}^{A}\left(\frac{1}{2}\openone_{A}\right)$ could
be diagonalized in the same basis, which is in contradiction to the
definition of $\ket{\psi^{A}}$. Consider now the classically correlated
state 
\begin{equation}
\rho_{cc}=\frac{1}{2}\ket{\psi^{A}}\bra{\psi^{A}}\otimes\ket{0^{B}}\bra{0^{B}}+\frac{1}{2}\ket{\phi^{A}}\bra{\phi^{A}}\otimes\ket{1^{B}}\bra{1^{B}}
\end{equation}
 with orthogonal states $\braket{\psi^{A}|\phi^{A}}=0$. We can write
the states as $\ket{\psi^{A}}\bra{\psi^{A}}=\frac{1}{2}\left(\openone_{A}+\sum_{i}v_{i}\sigma_{i}^{A}\right)$,
and $\ket{\phi^{A}}\bra{\phi^{A}}=\frac{1}{2}\left(\openone-\sum_{i}v_{i}\sigma_{i}^{A}\right)$.
We define the vector $\boldsymbol{w}$ such that the equality $\Lambda_{nu}\left(\sum_{i}v_{i}\sigma_{i}^{A}\right)=\sum_{i}w_{i}\sigma_{i}^{A}$
with $\sum_{i}w_{i}^{2}\neq0$ is satisfied. This is always possible,
since $\Lambda_{nu}$ is trace-preserving. The action of the channel
onto the two states $\ket{\psi^{A}}$ and $\ket{\phi^{A}}$ is as
follows: 
\begin{align}
\Lambda_{nu}^{A}\left(\ket{\psi^{A}}\bra{\psi^{A}}\right) & =\frac{1}{2}\left(\openone_{A}+\sum_{i}\left(s_{i}+w_{i}\right)\sigma_{i}^{A}\right),\\
\Lambda_{nu}^{A}\left(\ket{\phi^{A}}\bra{\phi^{A}}\right) & =\frac{1}{2}\left(\openone_{A}+\sum_{i}\left(s_{i}-w_{i}\right)\sigma_{i}^{A}\right).
\end{align}
 As noted above, the two Bloch vectors $\boldsymbol{s}$ and $\boldsymbol{r}=\boldsymbol{s}+\boldsymbol{w}$
are linearly independent. The same must hold for the vectors $\boldsymbol{s}+\boldsymbol{w}$
and $\boldsymbol{s}-\boldsymbol{w}$. This implies that the two states
$\Lambda_{nu}^{A}\left(\ket{\psi^{A}}\bra{\psi^{A}}\right)$ and $\Lambda_{nu}^{A}\left(\ket{\phi^{A}}\bra{\phi^{A}}\right)$
are not diagonal in the same basis. This completes the proof. \end{proof}

So far we saw that local unital and local semi-classical channels
acting on a single qubit cannot create quantum correlations from a
classically correlated multi-qubit state. These results hold independently
of the chosen measure for quantum correlations. In the following we
will go one step further by showing that these local channels never
increase a very general class of measures for quantum correlations
in multi-qubit systems. We consider distance-based measures of quantum
correlations $Q_{D}$, which are defined via the minimal distance
D to the set of the classically correlated states $CC$ \cite{Modi2010,Daki'c2010},
\begin{equation}
Q_{D}=\min_{\sigma\in CC}D\left(\rho,\sigma\right),\label{eq:QD}
\end{equation}
where $D$ does not necessarily have to be a distance in the mathematical
sense. The statement mentioned above will be shown to hold for all
distance measures $D$ with the property of being non-increasing under
any quantum channel $\Lambda$, i.e. 
\begin{equation}
D\left(\Lambda\left(\rho\right),\Lambda\left(\sigma\right)\right)\leq D\left(\rho,\sigma\right).\label{eq:D}
\end{equation}
 This property is also frequently used for defining entanglement measures
\cite{Vedral1997,Vedral1998}. \begin{thm} \label{thm:2}Quantum
correlations $Q_{D}$ in multi-qubit systems do not increase under
local unital channels $\Lambda_{lu}$ and local semi-classical channels
$\Lambda_{lsc}$: 
\begin{align}
Q_{D}\left(\Lambda_{lu}\left(\rho\right)\right) & \leq Q_{D}\left(\rho\right),\label{eq:lu}\\
Q_{D}\left(\Lambda_{lsc}\left(\rho\right)\right) & \leq Q_{D}\left(\rho\right).\label{eq:lsc}
\end{align}
 \end{thm}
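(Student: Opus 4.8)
The plan is to reduce the statement about $Q_D$ to the structural result of Theorem~\ref{thm:1} together with the contractivity assumption~(\ref{eq:D}). The key observation is that for a distance-based measure defined by minimization over $CC$, applying a channel $\Lambda$ to both arguments of $D$ can only decrease the distance, provided $\Lambda$ maps the closest classical state to \emph{some} classical state — so the minimization after the channel has at least one competitive candidate.

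\medskip

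First I would fix the input state $\rho$ and let $\sigma^{\star}\in CC$ be a classically correlated state achieving (or approaching, if the minimum is not attained) $Q_D(\rho)=D(\rho,\sigma^{\star})$. Now apply the local unital channel $\Lambda_{lu}$ to both states. By Theorem~\ref{thm:1}, a local unital channel acting on a single qubit cannot create quantum correlations in a multi-qubit system; hence $\Lambda_{lu}(\sigma^{\star})$ is again classically correlated, i.e. $\Lambda_{lu}(\sigma^{\star})\in CC$. (For several parties one applies the theorem's remark that the single-qubit argument extends to arbitrarily many qubits, and composes the local unital channels party by party, each step preserving membership in $CC$.) Therefore $\Lambda_{lu}(\sigma^{\star})$ is an admissible competitor in the minimization~(\ref{eq:QD}) for the output state, and
\begin{equation}
Q_D\!\left(\Lambda_{lu}(\rho)\right)=\min_{\sigma\in CC}D\!\left(\Lambda_{lu}(\rho),\sigma\right)\leq D\!\left(\Lambda_{lu}(\rho),\Lambda_{lu}(\sigma^{\star})\right)\leq D\!\left(\rho,\sigma^{\star}\right)=Q_D(\rho),
\end{equation}
where the last inequality is precisely the contractivity~(\ref{eq:D}) of $D$ under the channel $\Lambda_{lu}$. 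This proves~(\ref{eq:lu}). The argument for~(\ref{eq:lsc}) is identical: by Theorem~\ref{thm:1} a local semi-classical channel also maps $CC$ into $CC$, so $\Lambda_{lsc}(\sigma^{\star})\in CC$, and the same two inequalities give~(\ref{eq:lsc}). If the infimum defining $Q_D(\rho)$ is not attained, one runs the estimate with an $\varepsilon$-optimal $\sigma^{\star}$ and lets $\varepsilon\to 0$.

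\medskip

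I do not expect a genuine obstacle here — the proof is a short "data-processing" style argument. The only point requiring a word of care is the multi-party generalization: one must note that $CC$ is stable under each individual party's local unital (resp.\ semi-classical) channel, which follows by treating the other parties as spectators in the representation~(\ref{eq:cc}) and invoking the qubit result from Theorem~\ref{thm:1} on the acted-upon subsystem; composing these operations then handles a product of local channels on several qubits. A secondary subtlety is that $D$ need not be a true metric, but this is harmless: the proof uses only the contractivity~(\ref{eq:D}) and the fact that $D\geq 0$ with the minimization well-defined, both of which are assumed.
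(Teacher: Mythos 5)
Your argument is correct and is essentially the paper's own proof: take the optimal classically correlated state, note (via Theorem \ref{thm:1}) that the local unital or semi-classical channel maps it to another state in $CC$, and combine contractivity (\ref{eq:D}) with the fact that this image is an admissible candidate in the minimization (\ref{eq:QD}). Your added remarks on the multi-party composition and on $\varepsilon$-optimal states are harmless refinements of the same argument.
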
 
\begin{proof}
Let $\xi$ be the classically correlated state which minimizes the
distance, i.e. $Q_{D}\left(\rho\right)=D\left(\rho,\xi\right)$. Using
the property (\ref{eq:D}) of the distance to be nonincreasing under
quantum channels we obtain 
\begin{align}
Q_{D}\left(\rho\right) & =D\left(\rho,\xi\right)\geq D\left(\Lambda_{lu}\left(\rho\right),\Lambda_{lu}\left(\xi\right)\right),\\
Q_{D}\left(\rho\right) & =D\left(\rho,\xi\right)\geq D\left(\Lambda_{lsc}\left(\rho\right),\Lambda_{lsc}\left(\xi\right)\right).
\end{align}
 Now we use Theorem \ref{thm:1} noting that local unital channels
$\Lambda_{lu}$ and local semi-classical channels $\Lambda_{lsc}$
map the classically correlated state $\xi$ onto another classically
correlated state $\Lambda\left(\xi\right)$ which is not necessarily
the one that minimizes the distance to $\Lambda\left(\rho\right)$.
This observation finishes the proof. 
\end{proof}
One example for a measure that satisfies the properties (\ref{eq:lu})
and (\ref{eq:lsc}) - and thus Theorem \ref{thm:2} holds - is the
{\em geometric measure of quantumness} which we define as 
\begin{equation}
Q_{G}\left(\rho\right)=\min_{\sigma\in CC}\left(1-F\left(\rho,\sigma\right)\right)\label{eq:QEg-1}
\end{equation}
with the fidelity $F\left(\rho,\sigma\right)=\left(\mathrm{Tr}\sqrt{\sqrt{\rho}\sigma\sqrt{\rho}}\right)^{2}$.
Using the fact that the fidelity is non-decreasing on quantum channels
together with Theorem \ref{thm:2}, we see that the geometric measure
of quantumness does not increase under local unital channels and local
semi-classical channels. Alternatively, we can use the quantum relative
entropy $S\left(\rho||\sigma\right)=-\mathrm{Tr}\left[\rho\log_{2}\sigma\right]+\mathrm{Tr}\left[\rho\log_{2}\rho\right]$,
which also fulfills the property (\ref{eq:D}) \cite{Vedral1997,Vedral1998}.
From Theorem \ref{thm:2} follows that the resulting measure of quantum
correlations $Q_{S}=\min_{\sigma\in CC}S\left(\rho||\sigma\right)$
does not increase under local unital and local semi-classical channels.
$Q_{S}$ was also studied in \cite{Piani2011}, where it was called
relative entropy of quantumness.

So far we considered states consisting of an arbitrary number of qubits.
We have shown that local unital and local semi-classical channels
acting on a single qubit never increase quantum correlations as defined
by $Q_{D}$ \texttt{in} Eq. (\ref{eq:QD}). On the other hand, any
local channel which is non-unital and not semi-classical can in principle
create quantum correlations, independently of the considered measure,
out of a classically correlated state. An example for such a channel
is the amplitude damping channel as a model for dissipation. Thus,
dissipation can increase quantum correlations.

At the present stage it is natural to ask the question, for what kind
of input states this behavior can or cannot be observed in general.
The following theorem shows that pure states are special. \begin{thm}
The geometric measure of quantumness of multipartite systems with
arbitrary dimension cannot increase under any local quantum channel,
if the initial state is pure: 
\begin{equation}
Q_{G}\left(\Lambda_{l}\left(\ket{\psi}\bra{\psi}\right)\right)\leq Q_{G}\left(\ket{\psi}\bra{\psi}\right),
\end{equation}
 where $\Lambda_{l}$ is an arbitrary local quantum channel.\end{thm}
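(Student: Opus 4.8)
The plan is to exploit a feature special to pure states: the classically correlated state closest in fidelity to a pure state can always be chosen to be a pure \emph{product} state, and product states are mapped to product states -- hence again to classically correlated states -- by any local channel. Combined with the monotonicity of the fidelity under channels, this yields the claim in one line.

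First I would establish that, for a pure state $\ket{\psi}$ of parties $A_1,\dots,A_n$,
\begin{equation}
Q_G(\ket{\psi}\bra{\psi}) = 1 - \max\,|\braket{\psi|a_1,\dots,a_n}|^2 ,
\end{equation}
the maximum being over all pure product states $\ket{a_1,\dots,a_n}=\ket{a_1}\otimes\cdots\otimes\ket{a_n}$, and that it is attained at some $\xi=\ket{a_1}\bra{a_1}\otimes\cdots\otimes\ket{a_n}\bra{a_n}$. Indeed, for a pure state $F(\ket{\psi}\bra{\psi},\sigma)=\bra{\psi}\sigma\ket{\psi}$ is linear in $\sigma$, and every classically correlated $\sigma$ is a convex mixture of pure product states $\ket{k_1,\dots,k_n}\bra{k_1,\dots,k_n}$, so that $\bra{\psi}\sigma\ket{\psi}\le\max_{\vec k}|\braket{\psi|k_1,\dots,k_n}|^2$; this bound is saturated by choosing $\sigma=\xi$ to be the single pure product state of maximal overlap, which is itself classically correlated (a compactness argument guarantees the maximum exists). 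Note that the set $CC$ is not convex, so one cannot simply invoke that a linear functional is maximized at an extreme point; the direct overlap estimate is what does the work.

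Next, for an arbitrary local channel $\Lambda_l=\Lambda^{A_1}\otimes\cdots\otimes\Lambda^{A_n}$ (a channel acting on a single party being a special case), the image $\Lambda_l(\xi)=\Lambda^{A_1}(\ket{a_1}\bra{a_1})\otimes\cdots\otimes\Lambda^{A_n}(\ket{a_n}\bra{a_n})$ is again a product state, and every product state is classically correlated (diagonalize each factor). Hence $\Lambda_l(\xi)\in CC$ -- even though, by Theorem \ref{thm:1}, a non-unital local channel need not preserve $CC$ in general. Using $\Lambda_l(\xi)$ as a feasible point in the minimization defining $Q_G(\Lambda_l(\ket{\psi}\bra{\psi}))$, together with the fact that the fidelity is non-decreasing under quantum channels, $F(\Lambda_l(\rho),\Lambda_l(\sigma))\ge F(\rho,\sigma)$, one gets
\begin{align}
Q_G(\Lambda_l(\ket{\psi}\bra{\psi})) & \le 1 - F\bigl(\Lambda_l(\ket{\psi}\bra{\psi}),\Lambda_l(\xi)\bigr) \nonumber\\
& \le 1 - F(\ket{\psi}\bra{\psi},\xi) = Q_G(\ket{\psi}\bra{\psi}),
\end{align}
which is the assertion.

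The only step with real content -- and the one to be handled carefully -- is the first: verifying that the nearest classically correlated state to a pure state is a pure product state, where non-convexity of $CC$ rules out the usual extreme-point shortcut. Everything after that is an immediate consequence of the data-processing inequality for fidelity and the elementary fact that product states are classically correlated, and the argument goes through verbatim for any finite number of parties and any local dimensions.
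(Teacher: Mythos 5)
Your proposal is correct and follows essentially the same route as the paper: choose the closest classically correlated state to the pure input as a pure product state, apply the data-processing inequality for the fidelity, and use that local channels map product states to product states, which remain classically correlated. The only difference is that you spell out explicitly (via linearity of $\bra{\psi}\sigma\ket{\psi}$ in $\sigma$ and the decomposition of classically correlated states into pure product states) the step the paper summarizes as ``using the properties of the fidelity,'' which is a welcome but not substantively different addition.
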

\begin{proof} Let $\xi\in CC$ be defined such that $Q_{G}\left(\ket{\psi}\bra{\psi}\right)=1-F\left(\ket{\psi}\bra{\psi},\xi\right)$.
Using the properties of the fidelity $F$ we see that $\xi$ can be
chosen to be a pure product state $\xi=\ket{\phi}\bra{\phi}$. Moreover
$1-F$ does not increase under the action of any quantum channel,
i.e. $1-F\left(\ket{\psi}\bra{\psi},\ket{\phi}\bra{\phi}\right)\geq1-F\left(\Lambda_{l}\left(\ket{\psi}\bra{\psi}\right),\Lambda_{l}\left(\ket{\phi}\bra{\phi}\right)\right)$.
Since $\ket{\phi}$ is a product state, $\Lambda_{l}\left(\ket{\phi}\bra{\phi}\right)$
is also a product state. This observations completes the proof. \end{proof}
Note that Theorem 3 does {\em not} follow from the fact that for
pure states the amount of quantum correlations is equal to the amount
of entanglement.

So far we have shown that quantum correlations in multi-qubit systems
cannot increase under local unital quantum channels. A prominent example
for a unital channel is the phase damping channel, which is a model
for decoherence in a quantum system. Under decoherence the quantum
state $\rho=\sum_{i,j}\rho_{ij}\ket{i}\bra{j}$ is transformed to
the state 
\begin{equation}
\Lambda\left(\rho\right)=\sum_{i}\rho_{ii}\ket{i}\bra{i}+\left(1-p\right)\sum_{i\neq j}\rho_{ij}\ket{i}\bra{j}\label{eq:Lambda}
\end{equation}
 with the damping parameter $0\leq p\leq1$. Since $\Lambda$ is unital,
it is not possible to create quantum correlations with local phase
damping in a multi-qubit system. Surprisingly, this is not true if
the local systems are not qubits: qubits are special. This can be
demonstrated via the classically correlated state as input: 
\begin{equation}
\rho_{cc}=\frac{1}{2}\ket{\psi^{A}}\bra{\psi^{A}}\otimes\ket{0^{B}}\bra{0^{B}}+\frac{1}{2}\ket{\phi^{A}}\bra{\phi^{A}}\otimes\ket{1^{B}}\bra{1^{B}}\label{eq:rhocc}
\end{equation}
 with the orthogonal single-qutrit states $\ket{\psi^{A}}=\frac{1}{\sqrt{3}}\left(-\ket{0^{A}}+\ket{1^{A}}+\ket{2^{A}}\right)$
and $\ket{\phi^{A}}=\frac{1}{\sqrt{2}}\left(\ket{0^{A}}+\ket{1^{A}}\right)$.
We will show that a local phase damping channel $\Lambda_{A}$ acting
on subsystem $A$ generates quantum correlations. We consider the
action of the channel (\ref{eq:Lambda}) with the damping parameter
$p=\frac{1}{2}$ on the state $\rho_{cc}$ in Eq. (\ref{eq:rhocc}):
\begin{eqnarray}
\Lambda_{A}\left(\rho_{cc}\right) & = & \frac{1}{2}\sum_{i=1}^{3}\lambda_{i}\ket{\psi_{i}^{A}}\bra{\psi_{i}^{A}}\otimes\ket{0^{B}}\bra{0^{B}}\nonumber \\
 &  & +\frac{1}{2}\sum_{j=1}^{3}\mu_{j}\ket{\phi_{j}^{A}}\bra{\phi_{j}^{A}}\otimes\ket{1^{B}}\bra{1^{B}},\label{nocc}
\end{eqnarray}
 where the states $\left\{ \ket{\psi_{i}^{A}}\right\} $ are the eigenstates
of $\Lambda_{A}\left(\ket{\psi^{A}}\bra{\psi^{A}}\right)$ with the
corresponding eigenvalues $\lambda_{i}$. Similarly the states $\left\{ \ket{\phi_{j}^{A}}\right\} $
are eigenstates of $\Lambda_{A}\left(\ket{\phi^{A}}\bra{\phi^{A}}\right)$
with the eigenvalues $\mu_{j}$. One can see as follows that the state
$\Lambda_{A}\left(\rho_{cc}\right)$ is quantum correlated: The eigenvalues
of $\Lambda_{A}\left(\ket{\psi^{A}}\bra{\psi^{A}}\right)$ are given
by $\lambda_{1}=\frac{2}{3}$, and $\lambda_{2}=\lambda_{3}=\frac{1}{6}$.
The eigenstate to the largest eigenvalue $\lambda_{1}$ is given by
$\ket{\psi_{1}^{A}}=\ket{\psi^{A}}$. It is easy to check that $\ket{\psi_{1}^{A}}$
is not an eigenstate of $\Lambda_{A}\left(\ket{\phi^{A}}\bra{\phi^{A}}\right)$,
and therefore the state in Eq. (\ref{nocc}) is not classically correlated.
Thus we proved that it is possible to create quantum correlations
with a local phase damping channel, i.e. via local decoherence.

In conclusion, we have investigated the effect of {\em local} noisy
channels (i.e. trace-preserving completely positive maps) on quantum
correlations. While entanglement can never increase under such local
channels, quantum correlations without entanglement may or may not
increase, depending on the type of channel and the type of input state.
For multi-qubit systems we fully answer the question which local channels
can increase quantum correlations: unital and semi-classical local
channels cannot enhance quantum correlations, while non-unital and
non-semi-classical local channels (e.g. dissipation, corresponding
to amplitude damping) can increase quantum correlations. Surprisingly,
for higher-dimensional systems, even unital channels such as decoherence,
corresponding to phase-damping, can generate quantum correlations
from an initially classically correlated state. However, quantum correlations
as quantified by the geometric measure of quantumness can become larger
under local channels only when the initial state is mixed. - Thus,
we have shed some light on the behavior of quantum correlated states
in a noisy environment.

We acknowledge partial financial support by Deutsche Forschungsgemeinschaft
(DFG) and by the ELES foundation.

\paragraph{Note added:}

While finishing this Letter we became aware of related work \cite{Ciccarello}.
There the authors show that the quantum discord can increase under
a local amplitude damping channel.

 \bibliographystyle{apsrev4-1}
\bibliography{literature}

\end{document}